\newtheorem{corr}{Corollary}
\newtheorem{deff}{Definition}
\newtheorem{lemm}{Lemma}
\newtheorem{prob}{Problem}
\newtheorem{theo}{Theorem}
\newtheorem{assum}{Assumption}
\newtheorem{rem}{Remark}
\newtheorem{example}{Example}
\def\L2{{\cal L}_2}
\newlength{\defbaselineskip}
\newcommand{\setlinespacing}[1]%
           {\setlength{\baselineskip}{#1 \defbaselineskip}}
\newcommand{\beq}{\begin{equation}}
\newcommand{\eeq}{\end{equation}}
\newcommand{\xdasharrow}[2][-->]{
\tikz[baseline=-\the\dimexpr\fontdimen22\textfont2\relax]{
\node[anchor=south,font=\scriptsize, inner ysep=1.5pt,outer xsep=8pt](x){#2};
\draw[shorten <=3.4pt,shorten >=3.4pt,dashed,#1](x.south west)--(x.south east);
}
}
\def\BibTeX{{\rm B\kern-.05em{\sc i\kern-.025em b}\kern-.08em
    T\kern-.1667em\lower.7ex\hbox{E}\kern-.125emX}}
\def\BibTeX{{\rm B\kern-.05em{\sc i\kern-.025em b}\kern-.08em
    T\kern-.1667em\lower.7ex\hbox{E}\kern-.125emX}}
\title{
From Vertices to Convex Hulls: Certifying Set-Wise Compatibility for CBF Constraints 
}
\author{Shima Sadat Mousavi, Xiao Tan, and Aaron D. Ames %
\thanks{The authors are with the Department of Mechanical and Civil Engineering, California Institute of Technology, Pasadena, CA \texttt{\{smousavi,xiaotan,ames\}@caltech.edu}.}
\thanks{This research was supported by the Boeing Strategic University Initiative. 
}
}
\begin{document}

\maketitle

\thispagestyle{empty} 

\begin{abstract}

This paper develops certificates that propagate compatibility of multiple control barrier function (CBF) constraints from sampled vertices to their convex hull. Under mild concavity and affinity assumptions, we present three sufficient feasibility conditions {under which feasible
inputs over the convex hull can be obtained per coordinate,
with a common input, or via convex blending.} 
We also describe the associated computational methods, based on interval intersections or an offline linear program (LP). {Beyond certifying compatibility, we  give conditions under which the quadratic-program (QP) safety filter is affine in the state. This enables explicit implementations via convex combinations of vertex-feasible inputs.} 
 Case studies illustrate the results.
\end{abstract}

\begin{IEEEkeywords}
Control barrier function, safety filter, compatibility, feasibility, convex hull, quadratic programming.
\end{IEEEkeywords}

\section{Introduction}

Optimization-based safety filters built on Control Barrier Functions (CBFs) are widely used to  enforce forward invariance while maintaining performance 
\cite{ames2019control}. In many applications, systems must satisfy multiple state constraints under input bounds. This raises the core question of \emph{compatibility}: does a control input exist that satisfies all constraints at a given state? {Pointwise compatibility is easy to check via a convex program (e.g., a QP with hard CBF and input constraints). Region-wide certification is harder and typically requires dense sampling.}
This paper develops lightweight, set-wise certificates that propagate compatibility from finitely many sampled states (“vertices”) to all states in their convex hull.

Existing CBF--QP methods are fundamentally \emph{pointwise}: compatibility is checked at the current state and re-evaluated at the next. Canonical formulations appear in \cite{ames2019control}, and higher--relative-degree extensions retain this pointwise nature \cite{nguyen2016exponential,breeden2021high}. Recent results give sufficient conditions for pointwise feasibility and regularity: \cite{xiao2022sufficient} introduces an auxiliary feasibility constraint that preserves and enlarges the feasible set; \cite{isaly2024feasibility} provides feasibility and continuity conditions for multiple CBFs using tangent-cone and polynomial-verification tools. 

For multiple CBF constraints under input bounds, \cite{spiller2025feasibility} derives pointwise feasibility and robustness margins for bounded-input second-order systems. 
 Sampling-based methods refine state grids using Lipschitz bounds \cite{tan2022compatibility}. Closed-form QP analyses give compatibility conditions for box-constrained MIMO systems \cite{cohen2025compatibility}. SOS- and SDP-based approaches further certify feasibility for polynomial systems 
\cite{schneeberger2023sos,ClarkCDC2021, wang2023safety},  
though often with high computational cost. Overall, these methods enhance pointwise feasibility guarantees but provide limited lightweight \emph{set-wise} certification.

By contrast, we pursue \emph{lightweight, region-wide guarantees}. Using tools from convex analysis \cite{boyd2004convex}, 
we develop \emph{Compatibility Propagation Certificates (CPCs)} that extend vertex-level feasibility of CBF constraints to all states in their convex hull. Each certificate provides a structural condition under which feasibility \emph{propagates} through the region, together with an offline computational check to certify it. Specifically,

\begin{itemize}
 \item \textit{CPC--Interval.}
For box-bounded inputs, this certificate exploits 
{sign coherence of the control effectiveness matrix in the CBF condition}:
intersecting one-dimensional feasible intervals across vertices yields a region-wide {feasible} input box (Thm.~\ref{thm:vertex-sign-box}). Its special case, the \emph{Endpoint Rule}, applies when each input channel has a uniform effect, requiring no computation and giving an immediate feasibility test (Cor.~\ref{cor:endpoint-rule}).

    \item \textit{ \textcolor{black}{CPC--Common}}.
    This certificate ensures that a common, state-independent control input is feasible at all vertices, 
    and hence throughout their convex hull (Thm.~\ref{thm:single-input-propagation}). 
    It is certified by solving a single \textcolor{black} {Linear Program (LP)}, 
    yielding a precomputed {feasible} input for the entire region.

\item \textit{CPC--Blend.} This certificate ensures that convex averages of vertex inputs remain feasible for convex combinations of states under a pairwise condition (Thm.~\ref{thm:pairwise-mono-compat-fixed}). {When the control effectiveness matrix in the CBF constraints is state-independent}—as in LTI systems—this condition holds automatically, making the convex blend of vertex inputs a region-wide feasible input (Cor.~\ref{cor:pairwise-mono-constant}).

\end{itemize}

The proposed CPCs complement standard online CBF--QP methods by providing offline, region-wide feasibility guarantees over convex hulls. This enables explicit controllers without online optimization. The CPCs form a hierarchy of \emph{sufficient} conditions that balance simplicity and conservatism. Unlike SOS or grid-refinement approaches, which scale poorly with dimension~\cite{boyd2004convex,tan2022compatibility}, CPCs require only a one-time offline check using interval intersections or small LPs.

{We also show that, under mild structural conditions, the safety filter admits an \emph{explicit affine feedback form}. The control law can then be recovered from precomputed optimal {inputs} at vertices, eliminating the need for online optimization.}

\section{Preliminaries}

\textit{Notation.} 
{We use $\mathbb{R}$ and $\mathbb{R}_{\ge0}$ for the set of real and nonnegative real numbers, and $\mathbb{N}$ for the set of positive integers.} 
{The spaces $\mathbb{R}^n$ and $\mathbb{R}^{n\times m}$ denote the sets of $n$-dimensional real vectors and $n\times m$ real matrices. 
For $h:\mathbb{R}^n\!\to\!\mathbb{R}$ of class $C^1$ and $f:\mathbb{R}^n\!\to\!\mathbb{R}^n$, the gradient is $\nabla h(x)\in\mathbb{R}^{1\times n}$, and we define 
$L_f h(x)=\nabla h(x)^\top f(x)\in \mathbb{R}$. For $g:\mathbb{R}^n\to\mathbb{R}^{n\times m}$, 
$L_g h(x)=\nabla h(x)^\top g(x)\in\mathbb{R}^{1\times m}$.}
For any $N\in\mathbb{N}$, the simplex is $\Delta_N=\{\lambda\in\mathbb{R}_{\ge0}^N:\sum_j\lambda_j=1\}$, and the convex hull of points $x^1,\dots,x^N$ is
\(
\mathrm{co}\{x^1,\dots,x^N\}=\Big\{\sum_j \lambda_j x^j:\lambda\in\Delta_N\Big\}.
\)
{Vector inequalities are interpreted componentwise; for a symmetric matrix $A$, $A\succ0$ denotes positive definiteness.}
For a matrix $A\in\mathbb{R}^{n\times m}$, the entry in row $i$, column $j$ is $A_{ij}$ and the $k$th column is $A_{:,k}$. 
{The vector $\mathbf{1}_p$ denotes the $p$-dimensional all-ones vector.}
The notation $\mathcal{O}(\cdot)$ indicates asymptotic computational complexity.

\smallskip

Consider the control-affine system
\begin{equation}
\dot x = f(x) + g(x)u, 
\qquad x \in \mathcal{X}\subseteq \mathbb{R}^n,\; u \in \mathcal{U}\subseteq \mathbb{R}^m,
\label{eq:system}
\end{equation}
where $f:\mathbb{R}^n\!\to\!\mathbb{R}^n$ and $g:\mathbb{R}^n\!\to\!\mathbb{R}^{n\times m}$ are locally Lipschitz, 
and {the  input set $\mathcal{U}$ is nonempty, compact, and convex.} 
Let $\{h_i:\mathcal{X}\!\to\!\mathbb{R}\}_{i=1}^p$ be  continuously differentiable ($C^1$) functions. 
Define the superlevel sets
\(
\mathcal C_i \coloneqq \{x\in\mathcal{X}:\ h_i(x)\ge 0\},
\)
and the safe set
\(
\mathcal C \coloneqq \bigcap_{i=1}^p \mathcal C_i.
\)

\smallskip
\begin{deff}\label{def:CBF1}
 A  $C^1$ function $h_i:\mathcal X\to\mathbb R$ is a CBF for \eqref{eq:system} if there exists an extended class-$\mathcal K_\infty$ function $\alpha_i$ where
\[
\forall\,x\in\mathcal X\ \ \exists\,u\in\mathcal U:\quad
L_f h_i(x)+L_g h_i(x)\,u \ \ge\ -\alpha_i\!\big(h_i(x)\big).
\]
\end{deff}

{In Def.~\ref{def:CBF1}, we use the constrained–input version of CBFs with $u\in\mathcal U$ (cf.~\cite{ames2019control}).} 
From \cite{ames2019control}, any locally Lipschitz controller $u: x\mapsto u(x)$ that satisfies the above inequality pointwise guarantees that $x(t)\in \mathcal{C}_i$ for all $t$.

For each $i\in\{1,\dots,p\}$, let
\(
\Psi_i^\top(x)\coloneqq L_g h_i(x)\in\mathbb R^{1\times m},\;
\delta_i(x)\coloneqq L_f h_i(x)+\alpha_i\!\big(h_i(x)\big)\in\mathbb R,
\)
so the $i$th CBF inequality is $\Psi_i^\top(x)u+\delta_i(x)\ge 0$.
Stacking the $p$ rows yields
\begin{equation}
\label{eq:stacked_constraints}
\Psi(x)\,u+\delta(x)\ \ge\ 0,
\end{equation}
with $\Psi(x)\coloneqq[\Psi_1(x),\dots,\Psi_p(x)]^\top\in\mathbb R^{p\times m}$ and
$\delta(x)\coloneqq[\delta_1(x),\dots,\delta_p(x)]^\top\in\mathbb R^p$.

In order to enforce safety while maintaining performance, a \emph{safety filter}  maps each state $x\in\mathcal X\subset\mathbb R^n$ and a desired input $u_{\mathrm{des}}(x)\in \mathbb{R}^m$ to an input that keeps the system safe.  A common formulation for the CBF-based safety filter design is  
\begin{equation}
\label{eq:qp-safety}
\begin{aligned}
u^\star(x)=\arg\min_{u\in\mathcal U}\ & \tfrac12\|u-u_{\mathrm{des}}(x)\|^2\\
\text{s.t.}\ & \Psi(x)\,u+\delta(x)\ \ge\ 0.
\end{aligned}
\end{equation}

We say the CBF inequalities in \eqref{eq:stacked_constraints} are \emph{compatible at $x$} if there exists $u\in\mathcal U$ with  $\Psi(x)u+\delta(x)\ge 0$.
A natural question is how to certify compatibility over a region $\mathcal{X}$.  {Def.~\ref{def:CBF1} is per-constraint and does not ensure compatibility of~\eqref{eq:stacked_constraints} at $x$. Compatibility requires a single input $u\in\mathcal U$ that satisfies all inequalities at $x$ simultaneously.} In this work, we  propose several easy-to-check conditions using convex analysis. We start with two definitions.

\begin{deff}
A function $\varphi:\mathbb{R}^n \to \mathbb{R}$ is \emph{concave} on a convex set $\mathcal{X}$ if
\(
\varphi\!\left(\sum_{j=1}^N \lambda_j x^j\right) \ge \sum_{j=1}^N \lambda_j\,\varphi(x^j)
\)
for all $x^j\in\mathcal{X}$ and all $\lambda\in\Delta_N$.
Concavity is \emph{componentwise} for vector-valued functions. If the inequality is reversed, $\varphi$ is \emph{convex}  on $\mathcal{X}$ \cite{boyd2004convex}. 
\end{deff}

\begin{deff}
\label{def:compat-U}
A finite set $\{x^1,\dots,x^N\}$ has \emph{vertex compatibility} if for each $j$ there exists a  {\emph{vertex-feasible input}} $u^j\in\mathcal U$ such that $\Psi(x^j)u^j+\delta(x^j)\ge 0$.  {We call an input $u\in\mathcal U$ \emph{feasible} at $x$ if it satisfies the stacked CBF
inequalities \eqref{eq:stacked_constraints}.}
\end{deff}

We seek to address the following problem.

\begin{prob}
\label{prob:compat-hull-U}
Given $\Psi(x), \delta(x)$ from \eqref{eq:stacked_constraints}, a convex input set $\mathcal U$, and vertices $\{x^1,\dots,x^N\}$ with vertex compatibility, provide conditions ensuring compatibility on the hull
$H\coloneqq\operatorname{co}\{x^1,\dots,x^N\}$; i.e., for every $x\in H$ there exists $u(x)\in\mathcal U$  that satisfies \eqref{eq:stacked_constraints}.
\end{prob}

\section{Compatibility over Convex Hulls}

This section gives conditions under which compatibility at finitely many \emph{vertices} extends to the full convex hull. Each result targets a different structural case.

\begin{assum}
\label{assum:standing-rev}
Throughout, unless stated otherwise:
\begin{enumerate}
\item[(A1)] \textit{Vertex compatibility (with input constraints):} for each $j=1,\dots,N$ there exists a vertex input  $u^j\in\mathcal U$ such that
$\Psi(x^j)u^j+\delta(x^j)\ge 0$.
\item[(A2)] \textit{Concave drift on $H$:} The map $\delta:H\to\mathbb R^p$ is  concave.
\item[(A3)] \textit{Columnwise curvature alignment:}
for each column $k\in\{1,\dots,m\}$, the column map $\Psi_{:,k}(\cdot):H\to\mathbb R^p$ is concave, convex, or affine on $H$.
\end{enumerate}
\end{assum}

Under (A3), define for each input coordinate $k$: 
\begin{equation}
\label{eq:Spsi_k}
\mathcal S_k(\Psi)\triangleq
\begin{cases}
[0,\infty), & \text{if }\Psi_{:,k}(\cdot)\text{ is concave on }H,\\[2pt]
(-\infty,0], & \text{if }\Psi_{:,k}(\cdot)\text{ is convex on }H,\\[2pt]
\mathbb R, & \text{if }\Psi_{:,k}(\cdot)\text{ is affine on }H.
\end{cases}
\end{equation}
The \emph{sign-aligned cone} is then
\begin{equation}
\label{eq:Spsi}
\mathcal S(\Psi)\ \triangleq\ \{\,u\in\mathbb R^m:\ u_k\in\mathcal S_k(\Psi)\ \text{for all }k\,\}.
\end{equation}

\begin{rem}
If $\Psi$ is concave (resp.\ convex) on $H$,  $\mathcal S(\Psi)=\{u\ge 0\}$ (resp.\ $\{u\le 0\}$).
If $\Psi$ is affine or constant on $H$, then  $\mathcal S(\Psi)=\mathbb R^m$. 
{In particular, for an LTI plant $\dot{x}=A x + B u$ with affine CBFs $h_i(x)=a_i^{\top}x+b_i$, 
one has $L_g h_i(x)=a_i^{\top}B$ that is constant on $H$. 
Hence $\Psi$ is constant, and thus $\mathcal{S}(\Psi)=\mathbb{R}^m$.} 
\end{rem}

\begin{lemm}
\label{lem:phi-concave}
Under \textup{(A3)}, for any $u\in\mathcal S(\Psi)$ the map $\phi(x)\triangleq\Psi(x)u$ is concave on $H$.
\end{lemm}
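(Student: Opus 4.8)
The plan is to prove the lemma by decomposing $\phi(x) = \Psi(x)u$ coordinate-by-coordinate over the input channels $k$ and exploiting the sign alignment built into $\mathcal{S}(\Psi)$. Write $\phi(x) = \sum_{k=1}^m \Psi_{:,k}(x)\, u_k$, a sum of the column maps scaled by the scalar coordinates $u_k$. Since concavity (of a vector-valued map, componentwise) is preserved under nonnegative scaling and under addition, it suffices to show that each term $\Psi_{:,k}(x)\,u_k$ is concave on $H$; the conclusion then follows by summing.

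The key step is the case analysis on $k$ dictated by (A3) and the definition \eqref{eq:Spsi_k}--\eqref{eq:Spsi}. If $\Psi_{:,k}(\cdot)$ is concave on $H$, then $u\in\mathcal{S}(\Psi)$ forces $u_k \ge 0$, and $u_k$ times a concave map is concave. If $\Psi_{:,k}(\cdot)$ is convex on $H$, then $u_k \le 0$, and a nonpositive multiple of a convex map is concave (multiplying a convex function by $-|u_k|$ flips the inequality). If $\Psi_{:,k}(\cdot)$ is affine on $H$, then $u_k\in\mathbb{R}$ is unrestricted, but an affine map scaled by any real constant is still affine, hence in particular concave. In all three cases the $k$th term is concave on $H$, so the finite sum $\phi$ is concave on $H$.

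I expect no real obstacle here — the lemma is essentially bookkeeping on top of the definition of $\mathcal{S}(\Psi)$, which was engineered precisely so that this statement holds. The only points that need a line of care are: (i) recalling that componentwise concavity for vector-valued maps is closed under nonnegative linear combinations, which is immediate from the scalar case applied entry-by-entry; and (ii) handling $u_k = 0$ (the zero map is trivially affine, hence concave) so that the concave and convex cases go through even at the boundary of $\mathcal{S}_k(\Psi)$. Assumption (A3) is what guarantees every column falls into one of the three regimes, so the case analysis is exhaustive. The concavity of $\delta$ in (A2) is not needed for this particular lemma; it will enter later when $\phi(x) + \delta(x)$ is assembled.
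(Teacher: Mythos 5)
Your argument is correct and is essentially identical to the paper's proof: both decompose $\phi(x)=\sum_k \Psi_{:,k}(x)u_k$ and observe that the sign restrictions in $\mathcal S(\Psi)$ make each term concave (nonnegative $\times$ concave, nonpositive $\times$ convex, arbitrary $\times$ affine), so the sum is concave. No gaps; your extra remarks on the $u_k=0$ boundary case and componentwise concavity are just slightly more explicit bookkeeping than the paper's one-line version.
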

\begin{proof}
 Each component is a sum of column maps weighted by $u_k$, where concave columns carry nonnegative weights, convex columns carry nonpositive weights, and affine columns carry arbitrary weights; (nonnegative)$\times$concave $+$ (nonpositive)$\times$convex $+$ affine is concave on $H$.
\end{proof}

\begin{rem}
{Our results  require only a concave \emph{lower bound} on $\Psi(x)u+\delta(x)$ over the hull. 
 {Under (A3), $u\in \mathcal{S}(\Psi)$  ensures that $x\mapsto\Psi(x)u$ is concave. Thus, (A2) can be replaced by a concave lower bound $\underline{\delta}(\cdot)$ with $\underline{\delta}(x)\le\delta(x)$ for all $x\in H$.} }
\end{rem}

The following example shows that vertex compatibility alone does not guarantee hull compatibility. This motivates the CPC conditions developed next.

\begin{example}\label{ex:infeasibility} Consider the one-dimensional case
\[
\Psi(x)=\left[\begin{smallmatrix}
-(x-4)^2\\[1pt]
1 \end{smallmatrix}\right],\qquad
\delta(x)= \left[\begin{smallmatrix}
-x+10\\[1pt]
-x \end{smallmatrix}\right],\qquad
\mathcal U=[0,10],
\]
with vertices $x^1=0$, $x^2=3$, so $H=[0,3]$.  
Assump.~\ref{assum:standing-rev} holds with {vertex-feasible inputs} $u^1=0.5$ and $u^2=3$.  
However, the condition $\Psi(1.5)u+\delta(1.5)\ge0$ is infeasible, showing that compatibility fails inside $H$.
\end{example}

\subsection{Certificate I: {CPC--Interval}}

We begin with the fastest, optimization-free certificate for box inputs: the \emph{CPC--Interval}. It constructs one-dimensional feasible \emph{intervals} for each input coordinate
from the vertex constraints.
Intersecting these intervals yields an input box that is feasible for all states
in the convex hull.

Assume a set of box inputs $\mathcal U=\{u\in\mathbb R^m: u_{k,\min}\le u_k\le u_{k,\max}\}$ and known
 {vertex-feasible inputs} $\{u^j\}_{j=1}^N\subset\mathcal U$.
Fix an input coordinate $k$.
At each vertex $x^j$, the column $\Psi_{:,k}(x^j)$ captures how $u_k$ affects the $p$ CBF constraints.
Assume this column is \emph{sign-coherent} across the \emph{vertices}: it is entrywise nonnegative or entrywise nonpositive at every vertex.

Define
\[
\mathcal P_k=\{j:\Psi_{:,k}(x^j)\ge0\},\qquad 
\mathcal N_k=\{j:\Psi_{:,k}(x^j)\le0\},
\]
which collect, respectively, the vertices where the $k$-th input has a positive or negative influence on the CBF constraints.  Using the {vertex-feasible inputs} $u^j$, set
\[
L_k=\max_{j\in\mathcal P_k}(u^j)_k,\qquad
U_k=\min_{j\in\mathcal N_k}(u^j)_k,
\]
with $L_k=-\infty$ if $\mathcal P_k=\emptyset$ and $U_k=+\infty$ if $\mathcal N_k=\emptyset$. 
Here, $L_k$ is the largest input value valid for all ``positive'' vertices in $\mathcal P_k$, and
$U_k$ is the smallest input value valid for all ``negative'' vertices in $\mathcal N_k$.
The resulting feasible interval of the $k$-th input that satisfies both \eqref{eq:stacked_constraints} and actuator limits is then  
\(
I_k=[\,\max\{L_k,u_{k,\min}\},\ \min\{U_k,u_{k,\max}\}\,].
\)

{To ensure that the selected inputs respect the curvature of $\Psi$, 
each coordinate interval $I_k$ is refined by intersecting it with the corresponding 
sign-aligned set $\mathcal S_k(\Psi)$ defined in \eqref{eq:Spsi_k}:
\begin{equation}\label{eq:intersec}
\tilde I_k = I_k \cap \mathcal S_k(\Psi).
\end{equation}
This step narrows each input’s feasible interval to the values that match the curvature direction of the CBF constraints. If any $\tilde I_k$ is empty, the CPC--Interval test fails; otherwise, any $u$ with $u_k\in\tilde I_k$ for all $k$ guarantees compatibility over the  hull.}

\begin{theo}
\label{thm:vertex-sign-box}
If the intervals $\tilde I_k$ are nonempty for all $k$, then any $u$ with $u_k\in\tilde I_k$
satisfies  \eqref{eq:stacked_constraints} for all $x\in H$.
\end{theo}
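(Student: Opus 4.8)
The plan is to fix an arbitrary $x\in H$, write $x=\sum_{j=1}^N\lambda_j x^j$ with $\lambda\in\Delta_N$, take any $u$ with $u_k\in\tilde I_k$ for all $k$, and verify the stacked inequality $\Psi(x)u+\delta(x)\ge0$ row by row. Since $u_k\in\tilde I_k\subseteq\mathcal S_k(\Psi)$, Lemma~\ref{lem:phi-concave} gives that $\phi(x)\triangleq\Psi(x)u$ is concave on $H$, and (A2) gives that $\delta$ is concave on $H$; hence the lower bound
\[
\Psi(x)u+\delta(x)\ \ge\ \sum_{j=1}^N\lambda_j\bigl(\Psi(x^j)u+\delta(x^j)\bigr)
\]
holds componentwise. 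So it suffices to show $\Psi(x^j)u+\delta(x^j)\ge0$ at each vertex $x^j$, i.e.\ that the single input $u$ is feasible at every vertex.

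Next I would establish vertex feasibility of $u$ by a per-coordinate comparison against the vertex-feasible inputs $u^j$. Fix a vertex $j$ and look at the $k$-th term $\Psi_{ik}(x^j)u_k$ in row $i$. By the sign-coherence assumption on the columns at the vertices, either $j\in\mathcal P_k$ (so $\Psi_{:,k}(x^j)\ge0$) or $j\in\mathcal N_k$ (so $\Psi_{:,k}(x^j)\le0$). In the first case $u_k\ge\max\{L_k,u_{k,\min}\}\ge (u^j)_k$ because $j\in\mathcal P_k$ implies $(u^j)_k\le L_k$; since $\Psi_{ik}(x^j)\ge0$ this yields $\Psi_{ik}(x^j)u_k\ge\Psi_{ik}(x^j)(u^j)_k$. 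In the second case $u_k\le\min\{U_k,u_{k,\max}\}\le(u^j)_k$ and $\Psi_{ik}(x^j)\le0$, so again $\Psi_{ik}(x^j)u_k\ge\Psi_{ik}(x^j)(u^j)_k$. Summing over $k$ gives $\Psi_i(x^j)^\top u\ge\Psi_i(x^j)^\top u^j$ for every row $i$, i.e.\ $\Psi(x^j)u\ge\Psi(x^j)u^j$ componentwise, and therefore
\[
\Psi(x^j)u+\delta(x^j)\ \ge\ \Psi(x^j)u^j+\delta(x^j)\ \ge\ 0
\]
by vertex compatibility (A1). I should also note that $u\in\mathcal U$: each $u_k\in\tilde I_k\subseteq I_k\subseteq[u_{k,\min},u_{k,\max}]$, so the box constraint is met, which is what makes $u$ an admissible feasible input.

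Combining the two displays gives $\Psi(x)u+\delta(x)\ge\sum_j\lambda_j\bigl(\Psi(x^j)u+\delta(x^j)\bigr)\ge0$ for the arbitrary $x\in H$, which is the claim. The only mild subtlety — and the step I would be most careful about — is the bookkeeping in the per-coordinate argument: one must handle columns where $\Psi_{:,k}(x^j)$ is simultaneously nonnegative and nonpositive (i.e.\ zero at that vertex, so $j$ lies in both $\mathcal P_k$ and $\mathcal N_k$), and the empty-index conventions $L_k=-\infty$, $U_k=+\infty$, under which the corresponding one-sided comparison is vacuous; in all these degenerate cases the inequality $\Psi_{ik}(x^j)u_k\ge\Psi_{ik}(x^j)(u^j)_k$ still holds (trivially when $\Psi_{ik}(x^j)=0$). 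Everything else is a direct chain of inequalities, so no real obstacle is anticipated.
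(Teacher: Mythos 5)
Your proof is correct and follows essentially the same route as the paper's: reduce to vertex feasibility via concavity of $x\mapsto\Psi(x)u$ (Lemma~\ref{lem:phi-concave}) and of $\delta$, then establish $\Psi(x^j)u\ge\Psi(x^j)u^j$ coordinate-by-coordinate using the sign-coherence of the columns and the definitions of $L_k$, $U_k$. Your extra care with the degenerate cases (zero columns, empty $\mathcal P_k$ or $\mathcal N_k$) and the explicit check that $u\in\mathcal U$ are welcome additions but do not change the argument.
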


\begin{proof}
Fix any $u$ with $u_k\in\tilde I_k$ for all $k$, and let $x=\sum_{j=1}^N\lambda_j x^j\in H$ with $\lambda\in\Delta_N$.
By curvature alignment, $x\mapsto\Psi(x)u$ is concave (Lem.~\ref{lem:phi-concave}); $x\mapsto\delta(x)$ is concave as well, so
\(
\Psi(x)u+\delta(x)\ \ge\ \sum_{j=1}^N \lambda_j\big(\Psi(x^j)u+\delta(x^j)\big).
\)
It suffices to show $\Psi(x^j)u+\delta(x^j)\ge 0$ for each vertex $x^j$.
For any $j$ and $k$: if $j\in\mathcal P_k$, then $u_k\ge L_k\ge (u^j)_k$ and $\Psi_{:,k}(x^j)\ge 0$ give
$\Psi_{:,k}(x^j)u_k\ge \Psi_{:,k}(x^j)(u^j)_k$; if $j\in\mathcal N_k$, then $u_k\le U_k\le (u^j)_k$ and
$\Psi_{:,k}(x^j)\le 0$ give the same inequality. Summing over $k$ yields
$\Psi(x^j)u\ge \Psi(x^j)u^j$, and feasibility of $u^j$ implies
$\Psi(x^j)u+\delta(x^j)\ge 0$. Thus, by concavity, $\Psi(x)\,u+\delta(x)\ \ge\ 0$, for all $x\in H$.
\end{proof}

Thm.~\ref{thm:vertex-sign-box} propagates vertex compatibility to the whole hull 
when all coordinate intervals $\tilde I_k$ are nonempty. 
When  {vertex-feasible inputs} $\{u^j\}$ are unavailable, 
a simple alternative is to construct them directly from the column signs of $\Psi$. 
This yields the  \emph{Endpoint Rule}.

\begin{corr}
\label{cor:endpoint-rule}
Assume: (i) each column of $\Psi$ has a  uniform sign over $H$; 
(ii) $\mathcal U=\prod_{k=1}^m [u_{k,\min},u_{k,\max}]$; and 
(iii) $E_k:=[u_{k,\min},u_{k,\max}]\cap \mathcal S_k(\Psi)\neq\emptyset$ for all $k$.
Define
\[
u_k \;=\;
\begin{cases}
\sup E_k, & \Psi_{:,k}(x)\ge 0 \ \ \forall x\in H,\\[2pt]
\inf E_k, & \Psi_{:,k}(x)\le 0 \ \ \forall x\in H,
\end{cases}
\quad k=1,\dots,m,
\]
and set $u=(u_1,\dots,u_m)$. Then $u\in \mathcal U\cap \mathcal S(\Psi)$, and $u$ satisfies 
\eqref{eq:stacked_constraints} for all $x\in H$.
If any $E_k=\emptyset$, the test is inconclusive.
\end{corr}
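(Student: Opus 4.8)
The plan is to derive Corollary~\ref{cor:endpoint-rule} as a direct specialization of Theorem~\ref{thm:vertex-sign-box}, where the main work is to check that the hypothesis of that theorem is met by the $u$ constructed here. First I would observe that assumption (i), a uniform column sign over all of $H$, is stronger than the sign-coherence-at-vertices hypothesis used in Theorem~\ref{thm:vertex-sign-box}: if $\Psi_{:,k}(x)\ge 0$ for all $x\in H$, then in particular $\Psi_{:,k}(x^j)\ge 0$ at every vertex, so $j\in\mathcal P_k$ for all $j$ (and symmetrically $j\in\mathcal N_k$ for all $j$ if the column is nonpositive). Thus for each $k$ exactly one of $\mathcal P_k$, $\mathcal N_k$ is the full index set and the other may be taken empty, so $L_k=\max_j (u^j)_k$ and $U_k=+\infty$ in the nonnegative case, with the roles swapped in the nonpositive case.

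The next step is to exhibit vertex-feasible inputs $\{u^j\}$ so that Theorem~\ref{thm:vertex-sign-box} applies; the natural choice is $u^j=u$ for every $j$, i.e.\ the single constant input defined in the corollary. For this I must verify two things: that $u\in\mathcal U\cap\mathcal S(\Psi)$, and that $u$ is feasible at each vertex. Membership in $\mathcal U$ is immediate since $u_k\in E_k\subseteq[u_{k,\min},u_{k,\max}]$ by (iii), and membership in $\mathcal S(\Psi)$ follows because $E_k\subseteq\mathcal S_k(\Psi)$, so $u_k\in\mathcal S_k(\Psi)$ for all $k$. For vertex feasibility, note that $u$ is chosen coordinatewise to be the extreme point of $E_k$ in the direction that makes $\Psi_{:,k}(x)u_k$ as large as possible: when $\Psi_{:,k}(x^j)\ge 0$ we take $u_k=\sup E_k$, maximizing the nonnegative contribution, and when $\Psi_{:,k}(x^j)\le 0$ we take $u_k=\inf E_k$, again maximizing $\Psi_{:,k}(x^j)u_k$ (least negative). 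Hence $\Psi(x^j)u\ge \Psi(x^j)u'$ for every $u'\in\mathcal U\cap\mathcal S(\Psi)$, and in particular $\Psi(x^j)u\ge \Psi(x^j)u^j_{\mathrm{feas}}+\delta(x^j)-\delta(x^j)$ where $u^j_{\mathrm{feas}}$ is a vertex-feasible input guaranteed by (A1); adding $\delta(x^j)$ gives $\Psi(x^j)u+\delta(x^j)\ge \Psi(x^j)u^j_{\mathrm{feas}}+\delta(x^j)\ge 0$.

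With vertex feasibility of the constant input $u$ established, I would then run the argument of Theorem~\ref{thm:vertex-sign-box} (or simply invoke it): fix $x=\sum_j\lambda_j x^j\in H$ with $\lambda\in\Delta_N$; by Lemma~\ref{lem:phi-concave}, $x\mapsto\Psi(x)u$ is concave on $H$ since $u\in\mathcal S(\Psi)$, and $\delta$ is concave by (A2), so $\Psi(x)u+\delta(x)\ge\sum_j\lambda_j(\Psi(x^j)u+\delta(x^j))\ge 0$. This proves $u$ satisfies \eqref{eq:stacked_constraints} for all $x\in H$, and the final clause about inconclusiveness when some $E_k=\emptyset$ is just the observation that the construction of $u_k$ is then impossible, matching the failure mode of the CPC--Interval test in \eqref{eq:intersec}.

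I expect the only subtle point — the ``main obstacle,'' though it is mild — is the step asserting that the coordinatewise extremal choice of $u$ maximizes each term $\Psi_{:,k}(x^j)u_k$ and hence the whole sum $\Psi(x^j)u$ over admissible inputs; this requires being careful that the uniform-sign hypothesis (i) is what decouples the maximization across coordinates, so that no interaction between channels can spoil the bound, and that $E_k$ being a (possibly unbounded on one side) interval still has a well-defined sup/inf in the relevant direction whenever it is nonempty. Everything else is bookkeeping that reduces the corollary to Theorem~\ref{thm:vertex-sign-box} with the particular vertex-feasible inputs $u^j\equiv u$.
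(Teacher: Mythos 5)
Your proof is correct and follows essentially the same route as the paper's: both reduce the corollary to Theorem~\ref{thm:vertex-sign-box} by checking $u\in\mathcal U\cap\mathcal S(\Psi)$ and establishing vertex feasibility of the endpoint choice via the coordinatewise dominance $\Psi_{:,k}(x^j)\,u_k\ \ge\ \Psi_{:,k}(x^j)\,(u^j)_k$ over the (A1)-guaranteed vertex inputs. The one wrinkle --- your dominance is derived for $u'\in\mathcal U\cap\mathcal S(\Psi)$ but then applied to the (A1) input, which is only known to lie in $\mathcal U$ (an issue only when a column's curvature class conflicts with its sign, e.g.\ convex and nonnegative, so that $\sup E_k<u_{k,\max}$) --- is an implicit assumption the paper's own proof makes as well when it asserts $\sup E_k\in\tilde I_k$, so it does not distinguish your argument from theirs.
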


\begin{proof}
Under (A2)–(A3) and uniform sign over $H$, Thm.~\ref{thm:vertex-sign-box}
gives per–coordinate feasible intervals 
$I_k=[\max\{L_k,u_{k,\min}\},\,\min\{U_k,u_{k,\max}\}]$ with 
$U_k=+\infty$ if $\Psi_{:,k}\!\ge0$ on $H$ and $L_k=-\infty$ if $\Psi_{:,k}\!\le0$. 
Let $\tilde I_k=I_k\cap\mathcal S_k(\Psi)$ and 
$E_k=[u_{k,\min},u_{k,\max}]\cap\mathcal S_k(\Psi)\neq\emptyset$.
If $\Psi_{:,k}\!\ge0$ on $H$, choosing $u_k=\sup E_k$ ensures 
$u_k\in E_k\subseteq \tilde{I}_k$, hence $u_k\in\tilde I_k$; 
if $\Psi_{:,k}\!\le0$, the same holds with $u_k=\inf E_k$.
Thus $u=(u_1,\dots,u_m)\in\prod_k\tilde I_k\subseteq\mathcal U\cap\mathcal S(\Psi)$,
and by Thm.~\ref{thm:vertex-sign-box}, $\Psi(x)u+\delta(x)\ge0$ for all $x\in H$.
\end{proof}

The Endpoint Rule represents a conservative instance of CPC--Interval, 
requiring no optimization or {vertex-feasible inputs}. 
It is especially useful for quick compatibility checks when the column signs of $\Psi$ are sign-coherent across a region.

\subsection{Certificate II: {CPC--Common}}

The CPC--Common certificate checks whether one \emph{common} input can satisfy all vertex constraints. Rather than constructing per-vertex inputs, we solve one LP to find \(u^\star\in\mathcal U\cap\mathcal S(\Psi)\)  satisfying the CBF constraints at all vertices. If the LP is feasible, the same \(u^\star\) satisfies the  CBF constraints for every \(x\in H\)  by (A2) and Lem.~\ref{lem:phi-concave}.

\begin{theo}\label{thm:single-input-propagation}
If there exists a common $u\in\mathcal U\cap\mathcal S(\Psi)$ with $\Psi(x^j)u+\delta(x^j)\ge 0$ for all vertices $x^j$, $u$ satisfies \eqref{eq:stacked_constraints} for all $x\in H$.
\end{theo}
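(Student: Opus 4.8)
The plan is to imitate the proof of Theorem~\ref{thm:vertex-sign-box}, but skip the interval-construction machinery since we are now \emph{given} a single $u\in\mathcal U\cap\mathcal S(\Psi)$ that already works at every vertex. First I would fix such a $u$ and an arbitrary point $x=\sum_{j=1}^N\lambda_j x^j\in H$ with $\lambda\in\Delta_N$. Since $u\in\mathcal S(\Psi)$, Lemma~\ref{lem:phi-concave} gives that $x\mapsto\Psi(x)u$ is concave on $H$; by (A2), $x\mapsto\delta(x)$ is concave on $H$ as well; hence $x\mapsto\Psi(x)u+\delta(x)$ is concave (componentwise) on $H$.

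Next I would invoke concavity directly at the chosen convex combination:
\[
\Psi(x)\,u+\delta(x)\ \ge\ \sum_{j=1}^N \lambda_j\big(\Psi(x^j)\,u+\delta(x^j)\big).
\]
By hypothesis each summand $\Psi(x^j)u+\delta(x^j)\ge 0$, and since $\lambda_j\ge 0$ with $\sum_j\lambda_j=1$, the right-hand side is a nonnegative combination of nonnegative vectors, hence nonnegative. Therefore $\Psi(x)u+\delta(x)\ge 0$. Because $u\in\mathcal U$ by assumption, $u$ is a feasible input at $x$; as $x\in H$ was arbitrary, $u$ satisfies \eqref{eq:stacked_constraints} for all $x\in H$, which is the claim.

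There is essentially no real obstacle here: the statement is a direct corollary of Lemma~\ref{lem:phi-concave} plus (A2) and the definition of concavity, and the proof is a two-line specialization of the argument already used for Theorem~\ref{thm:vertex-sign-box} (indeed the ``$\Psi(x^j)u\ge\Psi(x^j)u^j$'' bookkeeping is no longer needed, since the same $u$ is feasible at all vertices). The only point worth stating carefully is that membership $u\in\mathcal S(\Psi)$ is what licenses the use of Lemma~\ref{lem:phi-concave}, and that $u\in\mathcal U$ is needed so the produced input is admissible. If one wanted to be slightly more general, the same argument goes through with (A2) replaced by any concave lower bound $\underline\delta(\cdot)\le\delta(\cdot)$ on $H$, exactly as in the remark following Lemma~\ref{lem:phi-concave}, but this is not needed for the stated result.
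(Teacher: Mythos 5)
Your argument is correct and is essentially identical to the paper's proof: both fix $x=\sum_j\lambda_j x^j$, invoke Lemma~\ref{lem:phi-concave} (licensed by $u\in\mathcal S(\Psi)$) together with (A2) to get concavity of $x\mapsto\Psi(x)u+\delta(x)$, and conclude $\Psi(x)u+\delta(x)\ge\sum_j\lambda_j\big(\Psi(x^j)u+\delta(x^j)\big)\ge 0$. No gaps; the extra remarks about $u\in\mathcal U$ and the concave lower bound are consistent with the paper but not needed.
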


\begin{proof}
Let $x=\sum_j\lambda_j x^j$. By Lem.~\ref{lem:phi-concave}, $x\mapsto\Psi(x)u$ is concave; $\delta(\cdot)$ is concave by Assump. \ref{assum:standing-rev}. Hence
\(
\Psi(x)u+\delta(x)\ \ge\ \sum_j\lambda_j\big(\Psi(x^j)u+\delta(x^j)\big)\ \ge\ 0.
\)
\end{proof}

\smallskip

{To apply CPC--Common, we solve a single LP using the vertex data
$\{\Psi(x^j),\delta(x^j)\}_{j=1}^N$.
The LP searches for an input $u\in\mathcal U\cap\mathcal S(\Psi)$ that maximizes
a common vertex margin:}
\begin{equation}
\label{eq:single-input-margin-lp}
{
\begin{aligned}
u^{\star},\,t^{\star}
&= \arg\max_{u,\,t}\ t\\
\text{s.t.}\quad 
& \Psi(x^j)u+\delta(x^j)\ \ge\ t\,\mathbf{1}_p,\qquad j=1,\dots,N,\\
& u\in \mathcal U \cap \mathcal S(\Psi).
\end{aligned}}
\end{equation}
 {Let $(u^\star,t^\star)$ denote the optimizer of \eqref{eq:single-input-margin-lp}.}
If $t^\star\ge 0$, then $u^\star$ satisfies~\eqref{eq:stacked_constraints} for all $x\in H$.

\begin{rem}
LP~\eqref{eq:single-input-margin-lp} has $m{+}1$ variables {($m$ inputs)} and $pN$ vertex constraints
 {(from $p$ CBF inequalities at $N$ vertices)}.
It also includes the input–set constraints (e.g., $q$ rows if $\mathcal U=\{u:Gu\le b\}$ with $G\in \mathbb{R}^{q\times m },\ b\in \mathbb{R}^q$)
and at most $m$ sign constraints from $\mathcal S(\Psi)$.
A dense
interior-point step costs
$\mathcal O\!\big((m{+}1)^3\big) + \mathcal O\!\big((pN{+}q{+}m)(m{+}1)^2\big)$
per iteration \cite{boyd2004convex}. 
The number of iterations is $\mathcal O\!\big(\sqrt{pN{+}q{+}m}\,\log(1/\varepsilon)\big)$, where $\varepsilon$ denotes 
how close the returned solution is to the true optimum \cite{boyd2004convex}.   In practice, only a few tens of iterations are  required. The LP is solved once offline; no per-state online optimization is needed. 
\end{rem}

\subsection{Certificate III: {CPC--Blend}}

{If no common input is feasible at all vertices (i.e., LP~\eqref{eq:single-input-margin-lp} is infeasible), we use \emph{CPC--Blend}. This certificate guarantees that convex blending of vertex states and their  {vertex-feasible inputs} preserves constraint compatibility.}

\begin{theo}
\label{thm:pairwise-mono-compat-fixed}
Let $\{u^j\}\subseteq\mathcal U\cap\mathcal S(\Psi)$ satisfy $\Psi(x^j)u^j+\delta(x^j)\ge 0$ at all vertices.
If
\begin{equation}\label{eq:pairwise_monotone}
\big(\Psi(x^i)-\Psi(x^j)\big)\big(u^i-u^j\big)\ \le\ 0\qquad \text{for all }\, i,j,
\end{equation}
then for any $x=\sum_j\lambda_j x^j$ with $\lambda\in\Delta_N$, the convex blend
$u_\lambda=\sum_j\lambda_j u^j$ satisfies \eqref{eq:stacked_constraints}.
\end{theo}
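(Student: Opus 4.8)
The plan is to expand $\Psi(x)u_\lambda + \delta(x)$ at the blended state $x = \sum_j \lambda_j x^j$ with the blended input $u_\lambda = \sum_j \lambda_j u^j$, and lower-bound it by the vertex residuals $r^j \triangleq \Psi(x^j)u^j + \delta(x^j) \ge 0$ plus a correction term that the pairwise condition \eqref{eq:pairwise_monotone} forces to be nonnegative. First I would note that since each $u^j \in \mathcal S(\Psi)$, so is the convex combination $u_\lambda$ (the set $\mathcal S(\Psi)$ is a product of intervals each containing $0$, hence convex); therefore Lemma~\ref{lem:phi-concave} applies to the \emph{fixed} input $u_\lambda$, giving concavity of $x \mapsto \Psi(x)u_\lambda$ on $H$. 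Combined with concavity of $\delta$ from (A2), this yields
\[
\Psi(x)u_\lambda + \delta(x) \ \ge\ \sum_j \lambda_j\big(\Psi(x^j)u_\lambda + \delta(x^j)\big).
\]
So it remains to show the right-hand side is nonnegative; this is the step that needs the pairwise hypothesis, since $u_\lambda \ne u^j$ in general.

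Next I would rewrite each term $\Psi(x^j)u_\lambda + \delta(x^j) = \Psi(x^j)u^j + \delta(x^j) + \Psi(x^j)(u_\lambda - u^j) = r^j + \Psi(x^j)\sum_i \lambda_i(u^i - u^j)$, so that
\[
\sum_j \lambda_j\big(\Psi(x^j)u_\lambda + \delta(x^j)\big) \;=\; \sum_j \lambda_j r^j \;+\; \sum_{i,j}\lambda_i\lambda_j\,\Psi(x^j)(u^i - u^j).
\]
The first sum is $\ge 0$ by vertex feasibility. For the second (double) sum, I would symmetrize over the index pair $(i,j)$: pairing the $(i,j)$ term with the $(j,i)$ term gives $\lambda_i\lambda_j\big(\Psi(x^j)(u^i-u^j) + \Psi(x^i)(u^j-u^i)\big) = -\lambda_i\lambda_j\big(\Psi(x^i)-\Psi(x^j)\big)(u^i-u^j)$, which is $\ge 0$ componentwise precisely by the pairwise monotonicity condition \eqref{eq:pairwise_monotone}. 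Hence the whole double sum is $\ge 0$ componentwise, and the chain of inequalities closes.

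The main subtlety — more bookkeeping than obstacle — is the symmetrization of the bilinear term: one must be careful that \eqref{eq:pairwise_monotone} is a componentwise (in $\mathbb R^p$) inequality, so the argument must be run coordinate-by-coordinate, and that the diagonal terms $i=j$ vanish identically ($\Psi(x^j)(u^j-u^j)=0$) so they don't need the hypothesis. A secondary point worth stating explicitly is the convexity of $\mathcal S(\Psi)$ used to keep $u_\lambda$ in the sign-aligned cone, which is what legitimizes invoking Lemma~\ref{lem:phi-concave}; without it the first concavity step would not be available. Everything else is a direct combination of (A2), Lemma~\ref{lem:phi-concave}, and nonnegativity of the $\lambda_j$'s.
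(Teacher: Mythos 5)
Your proposal is correct and follows essentially the same route as the paper: establish $u_\lambda\in\mathcal U\cap\mathcal S(\Psi)$, invoke Lemma~\ref{lem:phi-concave} and (A2) for the concavity bound, and then symmetrize the bilinear cross term $\sum_{i,j}\lambda_i\lambda_j\,\Psi(x^j)(u^i-u^j)$ so that the pairwise condition~\eqref{eq:pairwise_monotone} makes it nonnegative — this is exactly the paper's ``symmetric expansion'' with the factor $\tfrac12$. The only cosmetic difference is that you keep $\delta(x^j)$ inside the vertex residuals from the start, whereas the paper treats $\Psi$ and $\delta$ in two separate steps; both are valid.
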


\begin{proof}
Let $A^{(j)}:=\Psi(x^j)$ and $S:=\sum_j\lambda_j A^{(j)}$.
Since $\mathcal U$ is convex and $\mathcal S(\Psi)$ is a cone, $u_\lambda\in\mathcal U\cap\mathcal S(\Psi)$.
By Lem.~\ref{lem:phi-concave}, $\Psi(\cdot)u_\lambda$ is concave, so
\(
\Psi(x)u_\lambda \ \ge\ \sum_j\lambda_j A^{(j)}u_\lambda \ =\ S\,u_\lambda.
\)
A symmetric expansion gives
\[
S\,u_\lambda-\sum_j\lambda_j A^{(j)}u^j
=\tfrac12\sum_{i,j}\lambda_i\lambda_j\big(A^{(i)}-A^{(j)}\big)\big(u^j-u^i\big),
\]
that is nonnegative by \eqref{eq:pairwise_monotone}. Hence $S\,u_\lambda\ge \sum_j\lambda_j A^{(j)}u^j$.
Concavity of $\delta$ yields $\delta(x)\ge \sum_j\lambda_j\delta(x^j)$; therefore,
\(
\Psi(x)u_\lambda+\delta(x)\ \ge\ \sum_j\lambda_j\big(A^{(j)}u^j+\delta(x^j)\big)\ \ge\ 0,
\)
using vertex compatibility. This completes the proof.
\end{proof}

\begin{rem}\label{rem:lp certificate 3}
The CPC--Blend certificate requires verifying the pairwise condition~\eqref{eq:pairwise_monotone} once vertex inputs $\{u^j\}$ are available. 
These inputs can be obtained in two ways:

\smallskip
\noindent(i) \emph{Per-vertex approach:} Solve $N$ small LPs (one per vertex) to find feasible $\{u^j\}$, then apply the pairwise check~\eqref{eq:pairwise_monotone}. 
This approach is simple and parallelizable but may fail when the resulting $\{u^j\}$ violate~\eqref{eq:pairwise_monotone}. 

\smallskip
\noindent(ii) \emph{Joint LP approach:} Solve one combined LP that searches for all $\{u^j\}$ simultaneously while enforcing both vertex feasibility and pairwise condition~\eqref{eq:pairwise_monotone}:
\begin{equation}
\label{eq:joint-pairwise-lp}
\begin{aligned}
\{u^{j\star}\},\,t^{\star}
&= \arg\max_{\{u^j\},\,t}\ t\\
\text{s.t.}\quad 
& \Psi(x^j)u^j+\delta(x^j)\ \ge\ t\,\mathbf 1_p,\quad j=1,\dots,N,\\
& \big(\Psi(x^i)-\Psi(x^j)\big)\big(u^i-u^j\big)\ \le\ 0,\quad \forall\, i<j,\\
& u^j\in \mathcal U\cap\mathcal S(\Psi),\quad j=1,\dots,N.
\end{aligned}
\end{equation}
{Let $t^\star$ be its optimal value when \eqref{eq:joint-pairwise-lp} is feasible.
If $t^\star\ge 0$, then (A1) holds at all vertices with margin $t^\star$, and \eqref{eq:pairwise_monotone} holds as well.
Thm.~\ref{thm:pairwise-mono-compat-fixed} then gives compatibility over $H$.}
If \eqref{eq:joint-pairwise-lp} is infeasible, or if $t^\star<0$, the test is inconclusive.

\smallskip
\noindent {The joint LP~\eqref{eq:joint-pairwise-lp} offers a \emph{single-shot certificate}.
 {It is computationally heavier than the per-vertex approach, but it avoids failures from vertex inputs that are feasible individually yet violate \eqref{eq:pairwise_monotone}.}
In practice, the per-vertex method is preferable for large $N$ or parallel setups, while the joint LP is better suited for moderate $N$ or when a global certificate is desired.}

\smallskip
\noindent \emph{Complexity.} 
The pairwise check costs $O(N^2 p m)$ { ($p$ CBF constraints, $m$ inputs)}.
 {The per-vertex approach involves $N$ LPs, each with $m$ variables and roughly $p{+}q{+}m$ constraints.
By contrast, the joint LP has $Nm{+}1$ variables and about $pN+\binom{N}{2}p$ constraints.
Although both are polynomial in size, the latter is heavier in a single solve.} Both are tractable and executed once offline, incurring no online computational cost.
\end{rem}

As an immediate corollary of Thm~\ref{thm:pairwise-mono-compat-fixed}, if $\Psi$ is constant on $H$, then \eqref{eq:pairwise_monotone} holds trivially (the left side is $0$).

\begin{corr}\label{cor:pairwise-mono-constant}
If $\Psi(x)\equiv\bar\Psi$ on $H$, for any
$x=\sum_{j=1}^N\lambda_j x^j$ with $\lambda\in\Delta_N$, 
$u_\lambda=\sum_{j=1}^N\lambda_j u^j$ satisfies \eqref{eq:stacked_constraints}.
\end{corr}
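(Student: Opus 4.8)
The plan is to derive Corollary~\ref{cor:pairwise-mono-constant} as an immediate specialization of Theorem~\ref{thm:pairwise-mono-compat-fixed}. The only thing that needs to be checked is that the hypotheses of the theorem are met when $\Psi(x)\equiv\bar\Psi$ on $H$, after which the conclusion transfers verbatim.

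First I would note that when $\Psi$ is constant, it is in particular affine on $H$, so every column map $\Psi_{:,k}(\cdot)$ is affine; hence by \eqref{eq:Spsi_k} we have $\mathcal S_k(\Psi)=\mathbb R$ for every $k$, and therefore $\mathcal S(\Psi)=\mathbb R^m$. Consequently the requirement $\{u^j\}\subseteq\mathcal U\cap\mathcal S(\Psi)$ in Theorem~\ref{thm:pairwise-mono-compat-fixed} reduces to $\{u^j\}\subseteq\mathcal U$, which is exactly the vertex-compatibility hypothesis (A1) already standing in Assumption~\ref{assum:standing-rev}. Next I would verify the pairwise monotonicity condition \eqref{eq:pairwise_monotone}: since $\Psi(x^i)-\Psi(x^j)=\bar\Psi-\bar\Psi=0$ for all $i,j$, the left-hand side $\big(\Psi(x^i)-\Psi(x^j)\big)\big(u^i-u^j\big)$ is the zero vector, so the inequality $\le 0$ holds trivially (componentwise).

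With both hypotheses confirmed, I would simply invoke Theorem~\ref{thm:pairwise-mono-compat-fixed}: for any $x=\sum_{j=1}^N\lambda_j x^j$ with $\lambda\in\Delta_N$, the blend $u_\lambda=\sum_{j=1}^N\lambda_j u^j$ lies in $\mathcal U$ (convexity of $\mathcal U$) and satisfies the stacked CBF inequalities \eqref{eq:stacked_constraints} at $x$. If desired, one can also give the one-line direct argument: with $\Psi\equiv\bar\Psi$, $\Psi(x)u_\lambda+\delta(x)=\bar\Psi u_\lambda+\delta(x)=\sum_j\lambda_j\bar\Psi u^j+\delta(x)\ge\sum_j\lambda_j(\bar\Psi u^j+\delta(x^j))\ge 0$, where the first inequality uses concavity of $\delta$ on $H$ (A2) and the second uses vertex compatibility (A1).

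There is essentially no obstacle here — the corollary is a degenerate case in which the pairwise coupling term vanishes identically, so the only "work" is bookkeeping: confirming $\mathcal S(\Psi)=\mathbb R^m$ so that the cone constraint on the $u^j$ is vacuous, and observing that the bilinear condition is $0\le 0$. The main thing to be careful about is to state explicitly that $\mathcal U$ convex is what makes $u_\lambda\in\mathcal U$, and that (A2) supplies the concavity of $\delta$ used in the final chain of inequalities; both are already in force under Assumption~\ref{assum:standing-rev}.
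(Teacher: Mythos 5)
Your proposal is correct and follows essentially the same route as the paper, which derives the corollary by observing that constant $\Psi$ makes the left-hand side of \eqref{eq:pairwise_monotone} identically zero and then invokes Theorem~\ref{thm:pairwise-mono-compat-fixed}. Your additional bookkeeping (checking $\mathcal S(\Psi)=\mathbb R^m$ so the cone constraint on the $u^j$ is vacuous, and the optional direct chain of inequalities) is consistent with the paper's Remark on affine/constant $\Psi$ and adds no new ideas beyond what the paper states.
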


\begin{rem}
For systems with state–independent input channels $g(x)$ (e.g., LTI or linearized dynamics $\dot x = Ax + Bu$) and affine CBFs $h_i(x)=a_i^\top x+b_i$, we have $\Psi_i(x)=a_i^\top B$, hence $\Psi$ is constant. If, in addition, the class-$\mathcal K$ functions $\alpha_i$ are concave (e.g., $\alpha_i(s)=\kappa_i s$ with $\kappa_i\ge 0$), then
\(
\delta_i(x)=a_i^\top A x + \alpha_i(a_i^\top x + b_i)
\)
is concave on $H$.  
Thus, Assump.~\ref{assum:standing-rev} holds with constant $\Psi$, and Cor.~\ref{cor:pairwise-mono-constant} applies directly—no LP is required.  
This covers cases such as {Adaptive Cruise Control} and lane keeping \cite{Rajamani2012}, velocity–controlled mobile robots \cite{LynchPark2017}, and multirotors near hover under linearization \cite{MahonyKumarCorke2012}.
\end{rem}

{\begin{example}\label{ex:condition}
Revisiting Ex.~\ref{ex:infeasibility}, we note three failures:  
(i) $\Psi(\cdot)$ is not sign-coherent, (ii) the vertex-feasible inputs share no common value, and (iii) \eqref{eq:pairwise_monotone} is violated.
\end{example}}

\subsection{Affine Interpolation of the  Safety Filter}

{In many applications, the safety filter is evaluated repeatedly along trajectories in $H$.
If $\Psi(x)$, $\delta(x)$, and $u_{\mathrm{des}}(x)$ are affine on $H$, in general, the safety-filter optimizer
$u^\star(\cdot)$ is piecewise affine.
We next give conditions under which $u^\star(\cdot)$ on $H$ is equal to the affine
interpolation of the optimizer values $u^\star(x^j)$ at the vertices.}

Let the input set be a polytope
\[
\mathcal U=\{u\in\mathbb R^m:\ G u\le b\},\qquad 
G\in\mathbb R^{q\times m},\ b\in\mathbb R^q
\]
(e.g., a box with \(G=\begin{bmatrix}I\\ -I\end{bmatrix}\),
\(b=\begin{bmatrix}u_{\max}\\ -u_{\min}\end{bmatrix}\)).
For a given state \(x\), let \(\mathcal A(x)\subseteq\{1,\dots,p\}\) denote the
CBF rows active at the optimal control \(u^\star(x)\)  in \eqref{eq:qp-safety} and
\(\mathcal B(x)\subseteq\{1,\dots,q\}\) the active input bounds:
\[
\Psi_i(x)u^\star(x)+\delta_i(x)=0 \ (i\in\mathcal A),\qquad
G_\ell u^\star(x)=b_\ell \ (\ell\in\mathcal B).
\]
Let \(\Psi_{\mathcal A}(x)\), \(\delta_{\mathcal A}(x)\) collect the active CBF rows,
and \(G_{\mathcal B}, b_{\mathcal B}\) the active input rows.
A \emph{critical region} \(\mathcal C\) is a set of states where  
(i) the active sets \((\mathcal A,\mathcal B)\) remain constant,  
(ii) the Linear Independence Constraint Qualification (LICQ) holds
(i.e., the rows of \(\big[\Psi_{\mathcal A}(x);\ G_{\mathcal B}\big]\) are linearly independent), and  
(iii) strict complementarity holds (i.e., all active multipliers are positive) (see, e.g., \cite{tondel2003algorithm}).

\begin{assum}
\label{assum:affine-mpqp}
On $H=\mathrm{co}\{x^1,\dots,x^N\}$, the problem data are constant or affine in $x$:
(i) $\Psi(x)\equiv\bar\Psi\in\mathbb{R}^{p\times m}$;
(ii) $\delta(x)=\bar{\delta}_0+\bar{\Delta}x$, and {$u_{\mathrm{des}}(x)$ is affine in $x$, i.e., 
$u_{\mathrm{des}}(x)=\bar{u}_0+\bar{U}x$}; 
(iii) $\mathcal U=\{u:Gu\le b\}$.
{Here, barred symbols (\,$\bar\Psi,\bar{\delta}_0,\bar{\Delta},\bar{u}_0,\bar{U}$\,) denote constant matrices or vectors of appropriate dimensions defining the constant/affine maps.}
\end{assum}

\begin{theo}
\label{thm:affine-interp}
Under Assump.~\ref{assum:affine-mpqp}, let $\mathcal R\subseteq H$ be a convex critical region. Then,  $u^\star(\cdot)$ in \eqref{eq:qp-safety} is affine on $\mathcal R$; in particular, for any $x=\sum_j \lambda_j x^j\in\mathcal R$ with $\{x^j\}_{j=1}^N\subset \mathcal R$,  $\lambda\in\Delta_N$,
\begin{equation}\label{eq:interpol}
    u^\star(x)=\sum_j \lambda_j\,u^\star(x^j).
\end{equation}
\end{theo}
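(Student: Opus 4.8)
The plan is to exploit the classical multiparametric-QP (mpQP) structure: on a critical region, the active sets $(\mathcal A,\mathcal B)$ are fixed, LICQ holds, and strict complementarity holds, so the KKT system can be solved in closed form and yields $u^\star(x)$ as an explicit affine function of $x$. Once affinity on $\mathcal R$ is established, the interpolation identity \eqref{eq:interpol} is immediate, since an affine map commutes with convex combinations.

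\begin{proof}
Fix a convex critical region $\mathcal R\subseteq H$ with constant active sets $(\mathcal A,\mathcal B)$. For $x\in\mathcal R$, the KKT conditions of \eqref{eq:qp-safety} read
\begin{equation*}
u^\star(x)-u_{\mathrm{des}}(x)-\bar\Psi_{\mathcal A}^\top\mu(x)+G_{\mathcal B}^\top\nu(x)=0,\quad
\bar\Psi_{\mathcal A}u^\star(x)+\delta_{\mathcal A}(x)=0,\quad
G_{\mathcal B}u^\star(x)=b_{\mathcal B},
\end{equation*}
together with $\mu(x)\ge0$, $\nu(x)\ge0$, and complementarity on the inactive rows; here $\mu,\nu$ are the multipliers for the active CBF and input-bound rows, respectively, and I have used Assump.~\ref{assum:affine-mpqp}(i) so that $\Psi\equiv\bar\Psi$. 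Stacking the stationarity equation together with the two active-equality blocks gives a square linear system
\begin{equation*}
M\,z(x)=r(x),\qquad
M=\begin{bmatrix} I & -\bar\Psi_{\mathcal A}^\top & G_{\mathcal B}^\top\\ \bar\Psi_{\mathcal A} & 0 & 0\\ G_{\mathcal B} & 0 & 0\end{bmatrix},\qquad
z(x)=\begin{bmatrix}u^\star(x)\\ \mu(x)\\ \nu(x)\end{bmatrix},
\end{equation*}
where $r(x)$ collects $u_{\mathrm{des}}(x)$, $-\delta_{\mathcal A}(x)$, and $b_{\mathcal B}$. The matrix $M$ is constant on $\mathcal R$ (it depends only on $\bar\Psi$, $G$, and the fixed active sets) and is invertible: LICQ makes $[\bar\Psi_{\mathcal A};\,G_{\mathcal B}]$ full row rank, and combined with the identity Hessian this yields nonsingularity of the KKT matrix by the standard argument for equality-constrained QPs. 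By Assump.~\ref{assum:affine-mpqp}(ii), $r(x)=r_0+Rx$ is affine in $x$. Hence $z(x)=M^{-1}(r_0+Rx)$ is affine in $x$ on $\mathcal R$; in particular $u^\star(x)$ is affine, say $u^\star(x)=c_0+Cx$ for constants $c_0,C$ depending only on the active sets. (Strict complementarity guarantees that this affine formula keeps $\mu(x),\nu(x)>0$ and the inactive constraints strict throughout $\mathcal R$, so it is indeed the optimizer, not just a KKT point; uniqueness follows from strict convexity.)

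Finally, take any $x=\sum_j\lambda_j x^j$ with $\{x^j\}\subset\mathcal R$ and $\lambda\in\Delta_N$. Since $\mathcal R$ is convex, $x\in\mathcal R$, so the affine formula applies at $x$ and at each $x^j$:
\begin{equation*}
u^\star(x)=c_0+Cx=c_0+C\sum_j\lambda_j x^j=\sum_j\lambda_j(c_0+Cx^j)=\sum_j\lambda_j u^\star(x^j),
\end{equation*}
using $\sum_j\lambda_j=1$. This is \eqref{eq:interpol}.
\end{proof}

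The main obstacle is the justification that $M$ is invertible and that the resulting affine branch is genuinely optimal on all of $\mathcal R$: this is exactly where LICQ and strict complementarity enter. LICQ gives full row rank of the active-constraint block, which with the positive-definite (identity) Hessian yields nonsingularity of the KKT matrix; strict complementarity ensures the active/inactive partition—and hence the affine formula—persists across $\mathcal R$ rather than only at a single point. These are standard mpQP facts (cf.~\cite{tondel2003algorithm}), so the argument amounts to citing and instantiating them; the interpolation step itself is then a one-line consequence of affinity.
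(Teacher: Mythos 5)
Your proof is correct and follows essentially the same route as the paper's: fix the active sets on the critical region, observe that the KKT matrix is constant and nonsingular under LICQ with the identity Hessian, note the right-hand side is affine in $x$ by Assump.~\ref{assum:affine-mpqp}, and conclude affinity of $u^\star$ and hence the interpolation identity. Your added remarks on strict complementarity (to guarantee the KKT point remains the optimizer across $\mathcal R$) and uniqueness are welcome details the paper leaves implicit, but the argument is the same.
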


\begin{proof}
Under Assump.~\ref{assum:affine-mpqp} {(so $\Psi(x)\equiv\bar\Psi$ on $H$)}, the
Karush--Kuhn--Tucker  conditions of \eqref{eq:qp-safety}  on $\mathcal R$ are
\(
 u-u_{\mathrm{des}}(x)+\bar\Psi_{\mathcal A}^{\!\top}\lambda_{\mathcal A}+G_{\mathcal B}^{\!\top}\nu_{\mathcal B}=0,\;
 \bar\Psi_{\mathcal A}\,u+\delta_{\mathcal A}(x)=0,\;G_{\mathcal B}u=b_{\mathcal B},
\)
with multipliers \(\lambda_{\mathcal A},\nu_{\mathcal B}>0\).
Equivalently,
\[
\underbrace{\begin{bmatrix}
I & \bar\Psi_{\mathcal A}^{\!\top} & G_{\mathcal B}^{\!\top}\\
\bar\Psi_{\mathcal A} & 0 & 0\\
G_{\mathcal B} & 0 & 0
\end{bmatrix}}_{K\ \text{constant on }\mathcal R}
\!\begin{bmatrix}u\\ \lambda_{\mathcal A}\\ \nu_{\mathcal B}\end{bmatrix}
=
\begin{bmatrix}
\,u_{\mathrm{des}}(x)\\ -\delta_{\mathcal A}(x)\\ b_{\mathcal B}
\end{bmatrix}.
\]
The right-hand side is affine in $x$. With LICQ, $K$ is nonsingular, so
$(u,\lambda_{\mathcal A},\nu_{\mathcal B})=K^{-1}(\cdot)$ is affine in $x$.
Therefore, $u^\star(\cdot)$ is affine on $\mathcal R$, and \eqref{eq:interpol} follows.
\end{proof}

\begin{corr}
\label{cor:hull-interp}
{
Under Assump.~\ref{assum:affine-mpqp}, 
if (i) all vertices $\{x^j\}_{j=1}^N\subset H$ solve \eqref{eq:qp-safety} with a common active set $(\mathcal A,\mathcal B)$ satisfying LICQ, 
and (ii) every inactive constraint remains strictly inactive on $H$, i.e.,
\(
\Psi_i(x)u^\star(x)+\delta_i(x)>0\ \ (i\notin\mathcal A),\;
b_\ell-G_\ell u^\star(x)>0\ \ (\ell\notin\mathcal B),
\)
then $H$ is a single critical region and the optimal control $u^\star(\cdot)$ is affine on $H$.
}
\end{corr}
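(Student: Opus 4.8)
The plan is to show that the hypotheses (i) and (ii) of the corollary place all of $H$ inside a single critical region, after which Theorem~\ref{thm:affine-interp} applies verbatim. First I would verify that the active set $(\mathcal A,\mathcal B)$ is constant on $H$. Take any $x=\sum_j\lambda_j x^j\in H$ with $\lambda\in\Delta_N$, and let $u_\lambda=\sum_j\lambda_j u^\star(x^j)$. Since $\mathcal U$ is a polytope and each $u^\star(x^j)\in\mathcal U$, convexity gives $u_\lambda\in\mathcal U$; moreover, for $\ell\in\mathcal B$ we have $G_\ell u^\star(x^j)=b_\ell$ for every $j$, so $G_\ell u_\lambda=b_\ell$. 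Using $\Psi(x)\equiv\bar\Psi$ and the affinity $\delta(x)=\bar\delta_0+\bar\Delta x$, the map $x\mapsto\bar\Psi u+\delta(x)$ evaluated along the interpolation satisfies $\bar\Psi_{\mathcal A}u_\lambda+\delta_{\mathcal A}(x)=\sum_j\lambda_j\big(\bar\Psi_{\mathcal A}u^\star(x^j)+\delta_{\mathcal A}(x^j)\big)=0$, since each vertex term vanishes by the common-active-set assumption. Hence $u_\lambda$ is a feasible point of \eqref{eq:qp-safety} at $x$ that satisfies the active equality constraints indexed by $(\mathcal A,\mathcal B)$.

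Next I would argue that $u_\lambda$ is in fact the optimizer $u^\star(x)$, so that the active set at $x$ contains $(\mathcal A,\mathcal B)$ and, by hypothesis (ii), equals it. The KKT system at each vertex reads $u^\star(x^j)-u_{\mathrm{des}}(x^j)+\bar\Psi_{\mathcal A}^{\top}\lambda^j_{\mathcal A}+G_{\mathcal B}^{\top}\nu^j_{\mathcal B}=0$ with $\lambda^j_{\mathcal A},\nu^j_{\mathcal B}\ge 0$ (strictly positive under strict complementarity). Taking the convex combination with weights $\lambda_j$ and using affinity of $u_{\mathrm{des}}$, i.e. $\sum_j\lambda_j u_{\mathrm{des}}(x^j)=u_{\mathrm{des}}(x)$, yields $u_\lambda-u_{\mathrm{des}}(x)+\bar\Psi_{\mathcal A}^{\top}\bar\lambda_{\mathcal A}+G_{\mathcal B}^{\top}\bar\nu_{\mathcal B}=0$ with $\bar\lambda_{\mathcal A}=\sum_j\lambda_j\lambda^j_{\mathcal A}\ge 0$ and $\bar\nu_{\mathcal B}=\sum_j\lambda_j\nu^j_{\mathcal B}\ge 0$. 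Together with primal feasibility of $u_\lambda$ at $x$, complementary slackness (the constraints outside $(\mathcal A,\mathcal B)$ are strictly inactive at $u_\lambda$ by hypothesis~(ii), so their multipliers are zero), and convexity of the QP, this certifies that $u_\lambda=u^\star(x)$. Thus every $x\in H$ is solved by \eqref{eq:qp-safety} with the same active set $(\mathcal A,\mathcal B)$; hypothesis~(ii) makes the complementarity strict, and the LICQ assumption is imposed directly, so all three defining properties of a critical region hold on the convex set $H$.

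Finally, with $H$ established as a single convex critical region, Theorem~\ref{thm:affine-interp} (applied with $\mathcal R=H$) gives that $u^\star(\cdot)$ is affine on $H$, which is the claimed conclusion; the interpolation identity \eqref{eq:interpol} is then the specialization already contained in that theorem. I expect the main obstacle to be the rigorous justification that $u_\lambda$ is the \emph{unique} optimizer at $x$ rather than merely a KKT point with the right active set: one must invoke that the objective $\tfrac12\|u-u_{\mathrm{des}}(x)\|^2$ is strictly convex, so the KKT conditions are sufficient for global optimality and the minimizer is unique, and one must be careful that the aggregated multipliers $\bar\lambda_{\mathcal A},\bar\nu_{\mathcal B}$ indeed satisfy complementary slackness at $x$ — this is where hypothesis~(ii), ensuring the non-$(\mathcal A,\mathcal B)$ constraints stay strictly slack along the whole interpolation, does the essential work and rules out an active-set change in the interior of $H$.
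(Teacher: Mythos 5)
Your proof is correct, but it takes a genuinely different route from the paper's. The paper argues top-down: since the data are affine and hypothesis (ii) keeps every inactive constraint strictly slack, the active set $(\mathcal A,\mathcal B)$ cannot change anywhere on $H$, so $H$ is a single critical region and Thm.~\ref{thm:affine-interp} immediately gives affinity of $u^\star(\cdot)$. You instead work bottom-up: you construct the candidate $u_\lambda=\sum_j\lambda_j u^\star(x^j)$, verify primal feasibility and the active equalities by convex combination, aggregate the vertex KKT multipliers into nonnegative multipliers at $x$, and conclude $u_\lambda=u^\star(x)$ from strict convexity of the objective. This buys you something the paper's one-line proof does not: it sidesteps the mild circularity in the paper's phrase ``the constraint residuals $\Psi_i(x)u^\star(x)+\delta_i(x)$ \ldots are affine in $x$,'' which presupposes affinity of $u^\star$, and it in fact needs strict inactivity only at the vertices (the interior follows by averaging the strictly positive vertex residuals), so your argument would survive with a weaker hypothesis (ii). The price is length, and one small imprecision you should repair: when you invoke complementary slackness you say the non-$(\mathcal A,\mathcal B)$ constraints are strictly inactive \emph{at $u_\lambda$} ``by hypothesis (ii),'' but (ii) speaks of $u^\star(x)$, which at that stage of your argument is not yet known to equal $u_\lambda$. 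The fix is the one you already use elsewhere: $\bar\Psi_i u_\lambda+\delta_i(x)=\sum_j\lambda_j\bigl(\bar\Psi_i u^\star(x^j)+\delta_i(x^j)\bigr)>0$ for $i\notin\mathcal A$ (and likewise for $\ell\notin\mathcal B$), using strict inactivity at the vertices only. With that adjustment the argument is complete; the final appeal to Thm.~\ref{thm:affine-interp} then just packages what you have already shown.
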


\begin{proof}
Since the constraint residuals 
$\Psi_i(x)u^\star(x)+\delta_i(x)$ and $b_\ell-G_\ell u^\star(x)$ 
are affine in $x$, strict positivity at the vertices implies strict positivity 
throughout $H$, ensuring that no constraint changes its activity within $H$. Hence the active set $(\mathcal A,\mathcal B)$ remains constant, so
$H$ constitutes one critical region. The results thus follow  from Thm.~\ref{thm:affine-interp}. 
\end{proof}

Under Assump.~\ref{assum:affine-mpqp}, the set $H$ admits a finite
polyhedral partition into \emph{critical regions}. On each region, 
$u^\star(\cdot)$ is affine (Thm.~\ref{thm:affine-interp}).  
LICQ and strict complementarity ensure continuity across shared boundaries. 
Hence $u^\star$, being piecewise affine on compact $H$, is globally Lipschitz.

\section{Numerical  Examples}

This section illustrates how the certificates are applied to check compatibility of CBF constraints over convex hulls.
 {Table~\ref{tab:workflow} compares the certificates and indicates when each applies. As a rule of thumb,
CPC--Interval applies with sign coherence, while CPC--Common and CPC--Blend cover more general cases. Note that the certificates are only sufficient; if all fail, compatibility remains inconclusive.}

\begin{table}[t]
\centering
\renewcommand{\arraystretch}{1.2}
\caption{{Comparison and applicability of CPCs.}}
\label{tab:workflow}

{
\begin{tabular}{p{0.38\linewidth}p{0.55\linewidth}}
\toprule
\textbf{Prerequisite} & \textbf{Certificate} \\
\midrule
Box inputs, w.o. vertex inputs 
& Endpoint Rule (Cor.~\ref{cor:endpoint-rule}); no optimization. \\

Box inputs, w. vertex inputs  
& CPC--Interval (Thm.~\ref{thm:vertex-sign-box}); fast interval test. \\

Common feasible input exists 
& CPC--Common (Thm.~\ref{thm:single-input-propagation}); one offline LP. \\

No common feasible input; Pairwise condition \eqref{eq:pairwise_monotone} holds 
& CPC--Blend (Thm.~\ref{thm:pairwise-mono-compat-fixed} or Cor.~\ref{cor:pairwise-mono-constant} when $\Psi$ is constant); One offline LP w. larger size. \\
\bottomrule
\end{tabular}
}
\end{table}

Consider a three-room temperature–regulation model \cite{girard2015safety}.  
Let $x_i$ be the temperature of room $i$ ($i=1,2,3$), influenced by adjacent rooms, the
environment, and a local heater:
\[
\dot{x}_{i}
= a(x_{i+1}+x_{i-1}-2x_i)
+ b(t_e - x_i)
+ c(t_{h,i}-x_i)\,u_i,
\]
with cyclic indexing $x_0=x_3$, $x_4=x_1$.  
Here $t_e$ and $t_{h,i}$ are the environment and heater temperatures,  
$a,b,c$ are conduction coefficients, and $u_i\in[0,1]$ is the heater valve control.

\textit{Case 1.} Each room must stay above $25^\circ\!$C, so the safe set is  
\(
\mathcal C=\{(x_1,x_2,x_3):\ x_i-25\ge0,\ i=1,2,3\}.
\)
With parameters $(t_e,a,b,c)=(-1,\,0.05,\,0.06,\,0.08)$ and $t_{h,i}=50$,  
the  CBFs are $h_i(x)=x_i-25$.   With $\alpha(s)=s$, the stacked data in~\eqref{eq:stacked_constraints} is
\[
\Psi=4I-0.08\,\mathrm{Diag}(x),
\quad
\delta=\left[\begin{smallmatrix}
0.84x_1+0.05x_2+0.05x_3-25.06\\[1pt]
0.05x_1+0.84x_2+0.05x_3-25.06\\[1pt]
0.05x_1+0.05x_2+0.84x_3-25.06
\end{smallmatrix}\right].
\]
where $\mathrm{Diag}(x)$ denotes the diagonal matrix with diagonal entries $(x_1,x_2,x_3)$.
Over the polytope $H=\{(x_1,x_2,x_3):25\le x_i\le 30\}$, Assump.~\ref{assum:standing-rev} holds, and each column of $\Psi$ has a uniform sign.  
A feasible vertex input is $u=(0.78,0.78,0.78)$ at $x=(25,25,25)$ and $u=(0,0,0)$ at all other vertices.  
From these, \eqref{eq:intersec} yields the  intervals $\tilde I_k=[0.72,1]$ for $k=1,2,3$.  
By Thm.~\ref{thm:vertex-sign-box}, any $u$ in these intervals satisfies the stacked CBF constraints on  $H$, so the three CBFs are compatible over $H$. 

\textit{Case 2.} Now require each room temperature to lie in $[25,30]$, so the safe set is  
\(
\mathcal C=\{(x_1,x_2,x_3): x_i\!-\!25\ge0,\ 30\!-\!x_i\ge0,\ i=1,2,3\}.
\)
Using the same parameters as in Case~1, we compute $\Psi(x)$ and $\delta(x)$ as in~\eqref{eq:case2_num}.  
Here the hull $H$ coincides with $\mathcal C$, but the columns of $\Psi$ change sign over $H$, so Thm.~\ref{thm:vertex-sign-box} is not applicable.  
Instead, solving the LP in~\eqref{eq:single-input-margin-lp} once yields a  common {feasible} input $u=(0.78,0.78,0.78)$.  
By Thm.~\ref{thm:single-input-propagation}, this input satisfies all stacked CBF constraints on $H$, verifying compatibility of the six CBFs.

We illustrate this by simulating the room–temperature system from $10$ random initial states in $\mathcal C$ (Fig.~\ref{fig:case2_result}).  
Using the nominal controller  
\(
u_i = 0.05(x_{i+1}+x_{i-1}-2x_i)+0.05(25-x_i),
\) 
all trajectories eventually leave $\mathcal C$.  
In contrast, applying the constant input $u=(0.78,0.78,0.78)$ keeps every trajectory within $\mathcal C$.  
{Finally, the CBF-QP safety filter is always feasible and all trajectories are safe. The system moves toward the coolest admissible corner.}

\begin{figure*}
\vspace{2mm}
    \begin{equation} \label{eq:case2_num}
    \Psi =\left[\begin{smallmatrix}4.0 - 0.08 x_{1} & 0 & 0\\0.08 x_{1} - 4.0 & 0 & 0\\0 & 4.0 - 0.08 x_{2} & 0\\0 & 0.08 x_{2} - 4.0 & 0\\0 & 0 & 4.0 - 0.08 x_{3}\\0 & 0 & 0.08 x_{3} - 4.0\end{smallmatrix}\right]
,\quad \delta = \left[\begin{smallmatrix}0.84 x_{1} + 0.05 x_{2} + 0.05 x_{3} - 25.06\\- 0.84 x_{1} - 0.05 x_{2} - 0.05 x_{3} + 30.06\\0.05 x_{1} + 0.84 x_{2} + 0.05 x_{3} - 25.06\\- 0.05 x_{1} - 0.84 x_{2} - 0.05 x_{3} + 30.06\\0.05 x_{1} + 0.05 x_{2} + 0.84 x_{3} - 25.06\\- 0.05 x_{1} - 0.05 x_{2} - 0.84 x_{3} + 30.06\end{smallmatrix}\right]
\end{equation}
\end{figure*}

\begin{figure*}[ht]
	\centering
	\begin{subfigure}[t]{0.30\linewidth}		\includegraphics[width=.83\linewidth]{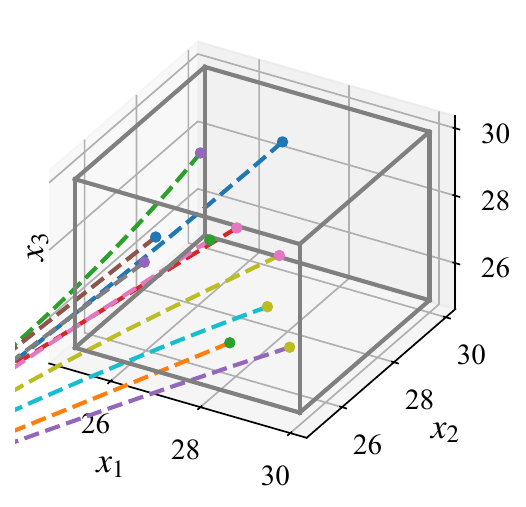}
		\caption{  Nominal controller } 
	\end{subfigure}
	\begin{subfigure}[t]{0.30\linewidth}
		\centering\includegraphics[width=.83\linewidth]{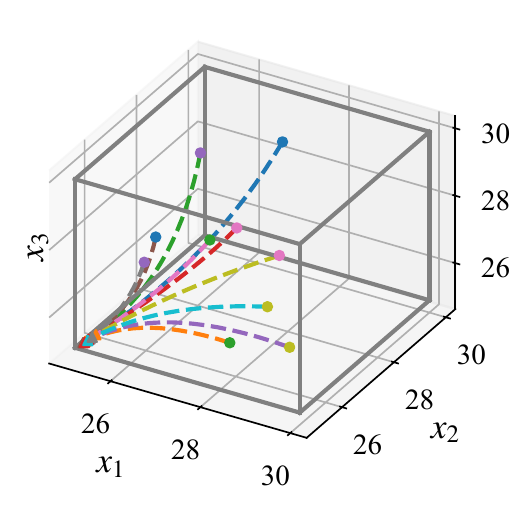}
		\caption{Constant {feasible} controller}
	\end{subfigure}
	\begin{subfigure}[t]{0.30\linewidth}
		\centering\includegraphics[width=.83\linewidth]{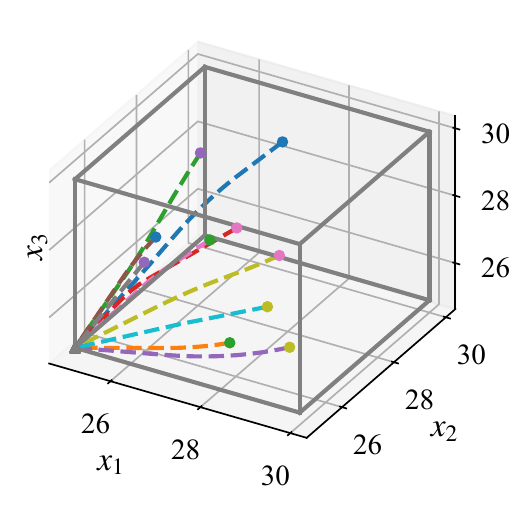}
		\caption{QP-based safety filter}
	\end{subfigure}
	\caption{ Three-room temperature trajectories using different controllers. }
	\label{fig:case2_result}
\end{figure*}

\emph{Case 3}. Consider the LTI system
\[
\dot{x}_1=x_2,\qquad 
\dot{x}_2=0.1x_1-0.1x_2+u,\qquad u\in[-1,1],
\]
with CBFs
\(
h_1(x)=1+x_1+x_2,\) \(
h_2(x)=1-x_1-x_2.
\)
Using $\alpha(h)=h$ gives
\[
\Psi=\left[\begin{smallmatrix}
1\\[1pt]
-1
\end{smallmatrix}\right],
\qquad
\delta=\left[\begin{smallmatrix}
1.1x_1+1.9x_2+1\\[1pt]
-1.1x_1-1.9x_2+1
\end{smallmatrix}\right].
\]
For any $x$, let $U_{\mathrm{safe}}(x)$ be the set of inputs satisfying both CBF
constraints at $x$. Two representative vertices give
\(
U_{\mathrm{safe}}(0,1)=[-1,-0.9],\) \( 
U_{\mathrm{safe}}(0,-1)=[0.9,1],
\)
so no single input is feasible at both.
{Since $\Psi$ is constant on
\(
H=\mathrm{co}\{(-1,0),(-1,1),(0,-1),(0,1),(1,0),(1,-1)\},
\)
the pairwise condition~\eqref{eq:pairwise_monotone} holds automatically.   {Together with Assump.~\ref{assum:standing-rev}, Cor.~\ref{cor:pairwise-mono-constant} applies. Thus, any convex combination of feasible vertex inputs
$u_\lambda=\sum_j \lambda_j u^j$
is feasible at the state $x=\sum_j \lambda_j x^j$.
This certifies compatibility over $H$.}

{To find feasible vertex inputs \(u^j\), we solve the joint LP~\eqref{eq:joint-pairwise-lp}, which returns
\(
u(-1,0)=0.1,\; u(-1,1)=0,\; u(0,1)=-0.9,\;
u(0,-1)=0.9,\; u(1,0)=-0.1,\; u(1,-1)=0.
\)
Blending these inputs yields a feasible control for all \(x\in H\).}

We obtain an explicit {feasible input} via convex interpolation (Thm.~\ref{thm:affine-interp}).
With $u_{\mathrm{des}}=0$, the polytope $H$ partitions into three regions (Fig.~\ref{fig:partition}).
On each region, the safety-filter optimizer is a convex combination of the CBF-QP solutions at the vertices.
As shown in Fig.~\ref{fig:2d-trajs}(a,b), trajectories from 10 random initial states remain in the safe region and inputs stay within bounds.
Across all tests, the convex-interpolation controller matches the online CBF-QP while maintaining safety throughout $H$.

\begin{figure}
    \centering
    \includegraphics[width=0.7\linewidth]{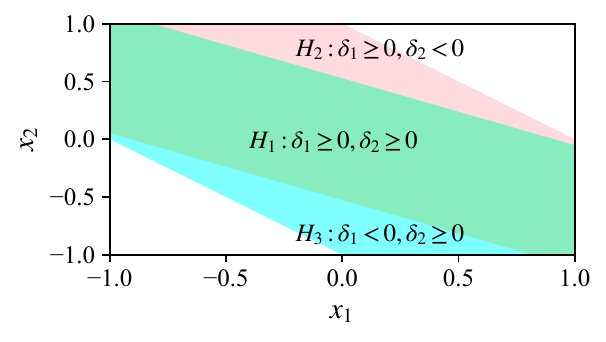}
    \caption{Partition of the polytope \(H\) into critical regions: green region \(H_1\) (both constraints inactive), pink region \(H_2\) (second constraint active), and cyan region \(H_3\) (first constraint active).}
    \label{fig:partition}
\end{figure}

\begin{figure}
    \centering
    \begin{subfigure}[t]{0.48\linewidth}
		\includegraphics[width=\linewidth]{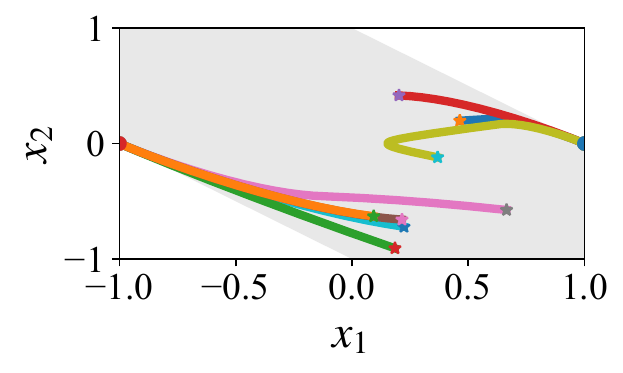}
		\caption{ Safe trajectories } 
	\end{subfigure}
	\begin{subfigure}[t]{0.48\linewidth}
		\centering\includegraphics[width=\linewidth]{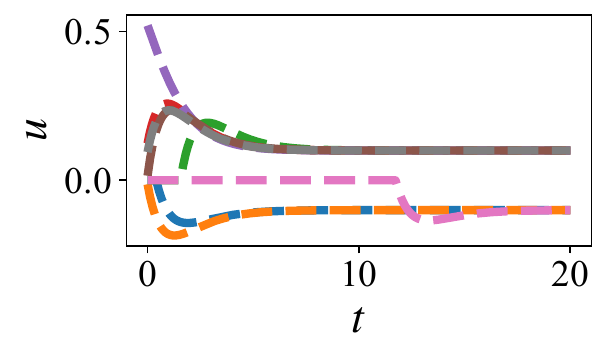}
		\caption{{Feasible} inputs}
	\end{subfigure}
    \caption{Trajectories and inputs under affine interpolation of the safety filters. Stars mark random initial states.}
    \label{fig:2d-trajs}
\end{figure}

\section{Conclusion}

 We proposed three \emph{Compatibility Propagation Certificates (CPCs)} that give lightweight, set-wise guarantees for extending vertex compatibility of CBF constraints to their convex hull. We also showed that under mild structural conditions, the safety filter admits an explicit affine form, enabling convex-combination evaluation without online optimization. {In practice, vertices can be chosen to capture critical operating conditions, and larger domains can be decomposed into multiple overlapping convex hulls. Systematic vertex selection and hull construction are key directions for future work.

\bibliographystyle{IEEEtran}
\bibliography{references}
\end{document}